\newcommand{\ket}[1] {| #1 \rangle}
\newcommand{\braket}[2] {\langle #1 | #2 \rangle}
\newcommand{\ketbra}[1]{ | #1 \rangle\!\langle #1 |}
\newcommand{\one}{\leavevmode\hbox{\small1\normalsize\kern-.33em1}}
\newcommand{\ba}{\begin{eqnarray}}
\newcommand{\ea}{\end{eqnarray}}
\newtheorem{thm}{Theorem}
\newtheorem{lem}[thm]{Lemma}
\newtheorem*{lem*}{Lemma}
\begin{document}

\title{How $\psi$-epistemic models fail at explaining the indistinguishability of quantum states}

\author{Cyril Branciard}
\affiliation{School of Mathematics and Physics, The University of Queensland, Brisbane, QLD 4072, Australia\\
Institut N\'eel, CNRS and Universit\'e Grenoble Alpes, 38042 Grenoble Cedex 9, France}

\date{\today}

\begin{abstract}
We study the extent to which $\psi$-epistemic models for quantum measurement statistics---models where the quantum state does not have a real, ontic status---can explain the indistinguishability of nonorthogonal quantum states. This is done by comparing the overlap of any two quantum states with the overlap of the corresponding classical probability distributions over ontic states in a $\psi$-epistemic model.
It is shown that in Hilbert spaces of dimension $d \geq 4$, the ratio between the classical and quantum overlaps in any $\psi$-epistemic model must be arbitrarily small for certain nonorthogonal states, suggesting that such models are arbitrarily bad at explaining the indistinguishability of quantum states.
For dimensions $d = 3$ and 4, we construct explicit states and measurements that can be used experimentally to put stringent bounds on the ratio of classical-to-quantum overlaps in $\psi$-epistemic models, allowing one in particular to rule out maximally $\psi$-epistemic models more efficiently than previously proposed.
\end{abstract}

\maketitle

Despite its central role, the quantum state remains one of the most mysterious objects of quantum theory.
Does it correspond to any physical reality, or does it merely represent one's information on a quantum system? These questions have triggered intense debates among physicists and philosophers since the advent of quantum theory, and are still the subject of active research in the study of quantum foundations.

The general framework of \emph{ontological models}~\cite{harrigan2010ein} proposes a rigorous approach to address such questions.
This framework presupposes the existence of underlying states of physical reality---\emph{ontic states}---and describes the quantum state as a state of knowledge---an \emph{epistemic state}---about the actual ontic state of a given quantum system, represented by a probability distribution over the set of ontic states.
A fundamental distinction is made between so-called \emph{$\psi$-ontic models}, in which any underlying ontic state determines the quantum state uniquely, and so-called \emph{$\psi$-epistemic models}, where the same ontic state can be compatible with different quantum states; i.e., the probability distributions corresponding to two different quantum states can overlap.
In the former case, the ontic state ``encodes'' the quantum state, which can hence be understood as a physical property of the system, while in the latter case the quantum state cannot be given the status of a real physical property.

The epistemic view of the quantum state is quite attractive, as it gives a natural explanation to many puzzling quantum phenomena~\cite{spekkens2007evi}---including, for instance, the collapse of the wave function, or the impossibility to perfectly distinguish nonorthogonal quantum states. However, it has been shown that $\psi$-epistemic models must be severely constrained if they are to reproduce the statistics of quantum measurements. Notably, Pusey, Barrett, and Rudolph showed that no $\psi$-epistemic models satisfying some natural independence condition for composite systems can reproduce quantum predictions~\cite{pusey2012ont}. A number of other no-go theorems have since been proven, under various assumptions~\cite{hall2011gen,colbeck2012is-,miller2013alt,schlosshauer2012imp,hardy2013are, maroney2012how, maroney2012abn,leifer2013max,nigg2012can, patra2013no-, aaronson2013psi,schlosshauer2014no-,barrett2013npe,colbeck2013a-s,leifer2014pem}.
Explicit $\psi$-epistemic models have nevertheless been constructed for quantum measurement statistics in Hilbert spaces of any dimension~\cite{lewis2012dis,aaronson2013psi}, which get around the assumptions of these no-go theorems.

While it is thus impossible to completely rule out $\psi$-epistemic models for quantum theory without auxiliary assumptions, one can still set bounds on the extent to which $\psi$-epistemic models can possibly explain certain quantum features, and, in particular, the indistinguishability of nonorthogonal quantum states~\cite{maroney2012how,maroney2012abn,leifer2013max,barrett2013npe,leifer2014pem}.
This can be done by investigating how much of  the overlap of two quantum states can be accounted for by the classical overlap of their corresponding epistemic states.
In this respect, Maroney and Leifer~\cite{maroney2012how,maroney2012abn,leifer2013max} showed that a certain (asymmetric) measure of overlap of classical probability distributions had to be smaller than $|\braket{\phi}{\psi}|^2$ for some quantum states $\ket{\psi}, \ket{\phi}$ in any Hilbert space of dimension $d \geq 3$. Barrett \emph{et al.}~\cite{barrett2013npe} showed that the ratio between two directly comparable measures for the classical and quantum overlaps had to scale at most like $1/d$ for certain pairs of quantum states when the dimension $d$ increases, while Leifer~\cite{leifer2014pem} exhibited states for which the same ratio has to decrease exponentially with $d$. Following these works, and in particular the approach of Barrett \emph{et al.}~\cite{barrett2013npe}, we show in this Letter that for all dimensions $d \geq 4$, there actually exist states for which the ratio of classical-to-quantum overlaps must be arbitrarily small---meaning that $\psi$-epistemic models are arbitrarily bad at explaining the indistinguishability of certain nonorthogonal quantum states in dimensions $d \geq 4$.
For dimensions 3 and 4, we exhibit explicit states and measurements that would allow one to bound experimentally the ratio of classical-to-quantum overlaps, in a more efficient way than previously proposed~\cite{barrett2013npe}.

\paragraph{Ontological models.---}

Let us start by recalling the framework of ontological models~\cite{harrigan2010ein,footnote_mixed_POVM}.
Such models posit the existence of \emph{ontic states} $\lambda$, taking values in some measurable space $\Lambda$, meant to describe the real state of affairs of a given physical system~\cite{footnote_HV}. The preparation of a quantum state $\ket{\psi}$ corresponds to the preparation of an ontic state $\lambda$ according to some probability distribution---called \emph{epistemic state}---$\mu_\psi(\lambda)$, such that
\ba
\mu_\psi(\lambda) \geq 0 \quad \text{and} \quad \int_\Lambda \mu_\psi(\lambda) \, \text{d}\lambda = 1.
\ea
In the following, the notation $\psi$ will be used to refer to both the quantum state $\ket{\psi}$ and the corresponding epistemic state $\mu_\psi$.

When a measurement $M$ is performed, the probability for each of its possible outcomes $m$ is supposed to depend only on the ontic state, and is given by a response function $\xi_M(m|\lambda)$. For the response function to be a well-defined probability distribution for each ontic state $\lambda$, it must be nonnegative and normalized; i.e., it satisfies
\ba
\xi_M(m|\lambda) \geq 0 \quad \text{and} \quad \sum_m \, \xi_M(m|\lambda) = 1. \label{eq:norm_sum_xi_m}
\ea
When the measurement is performed on a system prepared in the epistemic state $\mu_\psi$, the probability of each outcome is given, after integrating over $\Lambda$, by
\ba
P_M(m|\psi) = \int_\Lambda \xi_M(m|\lambda) \, \mu_\psi(\lambda) \, \text{d}\lambda.
\ea
The model reproduces the quantum measurement statistics in a $d$-dimensional Hilbert space ${\cal H}$ if for any state $\ket{\psi} \in {\cal H}$ and any projection eigenbasis $M = \{ \ket{m} \}$ of ${\cal H}$, the above probabilities $P_M(m|\psi)$ are those predicted by quantum theory---namely, according to the Born rule, $P_M(m|\psi) = |\braket{m}{\psi}|^2$.

In this framework, an ontological model is said to be \emph{$\psi$-epistemic} if there exists at least one pair of different (pure) quantum states $\ket{\psi}, \ket{\phi}$, for which the corresponding probability distributions $\mu_\psi, \mu_\phi$ have nonzero overlap; otherwise, the model is said to be \emph{$\psi$-ontic}.
Hence, in a $\psi$-epistemic models a single ontic state $\lambda$ could be compatible with more than one epistemic state. This suggests that the impossibility of perfectly distinguishing two nonorthogonal quantum states may be (at least partially) explained by the fact that the underlying real state of affairs may sometimes be the same for the two states.

To quantify this, we shall compare, as in Refs.~\cite{nigg2012can,barrett2013npe,leifer2014pem}, the probability of successfully distinguishing the two quantum states $\ket{\psi}, \ket{\phi}$ using optimal quantum measurements, and that of distinguishing the two epistemic states $\mu_\psi, \mu_\phi$ given that one knows the ontic state $\lambda$ (assuming in each case that $\ket{\psi}$ and $\ket{\phi}$, respectively $\mu_\psi$ and $\mu_\phi$, have been prepared with equal probabilities). These probabilities of success are given, respectively, by $1 - \omega_Q(\ket{\psi}, \ket{\phi})/2$ and $1 - \omega_C(\mu_\psi, \mu_\phi)/2$, where the \emph{quantum} and \emph{classical overlaps} $\omega_Q$ and $\omega_C$ are defined as~\cite{barrett2013npe}
\ba
\omega_Q(\ket{\psi}, \ket{\phi}) &=& 1 - \sqrt{1 - |\braket{\psi}{\phi}|^2}, \\[1mm]
\omega_C(\mu_\psi, \mu_\phi) &=& \int_\Lambda \min\! \big[ \mu_\psi(\lambda) , \mu_\phi(\lambda) \big] \, \text{d}\lambda.
\ea
Clearly, one has $0 \leq \omega_C(\mu_\psi, \mu_\phi) \leq \omega_Q(\ket{\psi}, \ket{\phi}) \leq 1$ in any model that reproduces quantum measurement statistics~\cite{nigg2012can}. Indeed, given $\lambda$ one can reproduce the optimal quantum measurement that gives a probability $1 - \omega_Q(\ket{\psi}, \ket{\phi})/2$ of distinguishing the two preparations; the optimal classical strategy for distinguishing $\mu_\psi$ and $\mu_\phi$ cannot give a lower probability of success. Ontological models such that $\omega_C(\mu_\psi, \mu_\phi) = \omega_Q(\ket{\psi}, \ket{\phi})$ for all states $\psi, \phi$ are said to be \emph{maximally $\psi$-epistemic}~\cite{footnote_max_psi_epist}: these models fully explain the indistinguishability of nonorthogonal quantum states by that of the corresponding epistemic states.
In general, for any two nonorthogonal states $\psi, \phi$ (such that $\omega_Q(\ket{\psi}, \ket{\phi}) \neq 0$) we shall define the \emph{ratio of classical-to-quantum overlaps} as
\ba
\kappa(\psi, \phi) = \frac{\omega_C(\mu_\psi,\mu_\phi)}{\omega_Q(\ket{\psi}, \ket{\phi})} \, \leq \, 1. \label{eq:def_kappa}
\ea
As the quantum and classical overlaps have the same operational interpretation in terms of probabilities of successful distinctions, this ratio suitably quantifies how much the model explains of the quantum indistinguishability of $\ket{\psi}$ and $\ket{\phi}$.

\paragraph{Constraints on $\psi$-epistemic models.---} We now show how the possible values of $\kappa(\psi, \phi)$ are constrained for ontological models that reproduce quantum predictions.
Let us start by introducing a first Lemma; its proof follows closely that of Ref.~\cite{barrett2013npe}, and is given in Appendix~\ref{app_proof_lemma1}:
\begin{lem}
\label{lem:general_ineq}
Consider $n+1$ states $\{ \psi_j \}_{0 \leq j \leq n}$ (with $n \geq 2$). For each triplet $(\psi_0, \psi_{j_1}, \psi_{j_2})$, with $1 \leq j_1 < j_2 \leq n$, consider a measurement $M_{j_1,j_2}$ with 3 possible outcomes $(m_0, m_1, m_2)$.
Then (with $j_0 = 0$)
\ba
&& \hspace{-5mm} \sum_{1 \leq j \leq n} \omega_C(\mu_{\psi_0}, \mu_{\psi_j}) \ = \sum_{1 \leq j \leq n} \kappa(\psi_0, \psi_j) \ \omega_Q(\ket{\psi_0}, \ket{\psi_j}) \nonumber \\
 && \hspace{10mm} \leq \ 1 \ + \hspace{-1mm} \sum_{1 \leq j_1 < j_2 \leq n} \, \sum_{i=0}^2 \, P_{M_{j_1,j_2}}(m_i | \psi_{j_i}). \ \label{eq:bnd_sum_wc}
\ea
\end{lem}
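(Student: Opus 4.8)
The plan is to prove \eqref{eq:bnd_sum_wc} at the level of the ontic space: establish a pointwise bound on the integrands and then integrate over $\Lambda$. The equality in the first line of \eqref{eq:bnd_sum_wc} is immediate from the definition \eqref{eq:def_kappa} of $\kappa$, so only the inequality needs proof. Writing $\omega_C(\mu_{\psi_0}, \mu_{\psi_j}) = \int_\Lambda \min[\mu_{\psi_0}(\lambda), \mu_{\psi_j}(\lambda)]\,\text{d}\lambda$, $P_{M_{j_1,j_2}}(m_i|\psi_{j_i}) = \int_\Lambda \xi_{M_{j_1,j_2}}(m_i|\lambda)\,\mu_{\psi_{j_i}}(\lambda)\,\text{d}\lambda$ (with $j_0 = 0$), and recalling $\int_\Lambda \mu_{\psi_0}(\lambda)\,\text{d}\lambda = 1$, it suffices to show, for every fixed $\lambda$ (with the $\lambda$-dependence of the densities left implicit),
\begin{equation*}
\sum_{j=1}^n \min[\mu_{\psi_0}, \mu_{\psi_j}] \ \leq \ \mu_{\psi_0} + \sum_{1 \leq j_1 < j_2 \leq n} \sum_{i=0}^2 \xi_{M_{j_1,j_2}}(m_i|\lambda)\,\mu_{\psi_{j_i}}.
\end{equation*}

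I would obtain this from two ingredients. First, since for each pair $(j_1, j_2)$ the response function satisfies $\xi_{M_{j_1,j_2}}(m_i|\lambda) \geq 0$ and $\sum_{i=0}^2 \xi_{M_{j_1,j_2}}(m_i|\lambda) = 1$ by \eqref{eq:norm_sum_xi_m}, the inner sum $\sum_{i=0}^2 \xi_{M_{j_1,j_2}}(m_i|\lambda)\,\mu_{\psi_{j_i}}$ is a convex combination of the three densities $\mu_{\psi_0}, \mu_{\psi_{j_1}}, \mu_{\psi_{j_2}}$, hence bounded below by their minimum, $\min[\mu_{\psi_0}, \mu_{\psi_{j_1}}, \mu_{\psi_{j_2}}]$. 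Second, it then remains to establish the purely numerical inequality
\begin{equation*}
\sum_{j=1}^n \min(a, b_j) \ \leq \ a + \sum_{1 \leq j_1 < j_2 \leq n} \min(a, b_{j_1}, b_{j_2}),
\end{equation*}
valid for arbitrary nonnegative reals $a, b_1, \ldots, b_n$ (applied with $a = \mu_{\psi_0}$ and $b_j = \mu_{\psi_j}$); chaining the two bounds yields the desired pointwise inequality.

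Proving this last inequality is the crux of the argument and the only delicate point. I would reduce it by setting $c_j = \min(a, b_j) \in [0, a]$ and using $\min(a, b_{j_1}, b_{j_2}) = \min(c_{j_1}, c_{j_2})$, so the claim becomes $\sum_{j} c_j \leq a + \sum_{j_1 < j_2} \min(c_{j_1}, c_{j_2})$ with $0 \leq c_j \leq a$. Relabeling so that $c_1 \leq \cdots \leq c_n$, each unordered pair contributes its smaller element, giving $\sum_{j_1 < j_2} \min(c_{j_1}, c_{j_2}) = \sum_{k=1}^{n-1} (n-k)\,c_k$; the difference between the right- and left-hand sides is then $a - c_n + \sum_{k=1}^{n-1} (n-k-1)\,c_k$, which is manifestly nonnegative since $c_n \leq a$ and $n - k - 1 \geq 0$ for $k \leq n-1$. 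Integrating the pointwise bound over $\Lambda$ and using the normalizations recalled above then gives \eqref{eq:bnd_sum_wc}. The reduction steps—convexity of the response functions together with the normalization of $\mu_{\psi_0}$—are routine, so the combinatorial sorting inequality is where the real work lies.
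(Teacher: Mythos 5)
Your proof is correct, and it reaches the bound by a genuinely different implementation than the paper's. The paper lifts everything to the product space $\Lambda \times \mathbb{R}^+$: each epistemic state $\mu_{\psi_j}$ is replaced by the region $\Psi_j$ under its graph, the classical overlaps and the triple-minimum integrals become volumes of pairwise and triple intersections, and the key combinatorial step is the second Bonferroni inequality $\nu\big( \bigcup_j A_j \big) \geq \sum_j \nu(A_j) - \sum_{j_1<j_2} \nu(A_{j_1} \cap A_{j_2})$ applied to $A_j = \Psi_0 \cap \Psi_j$, combined with $\nu\big( \bigcup_j A_j \big) \leq \nu(\Psi_0) = 1$. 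You instead never leave $\Lambda$: you prove a pointwise inequality whose combinatorial core is the sorted-sum estimate $\sum_j c_j \leq a + \sum_{j_1<j_2} \min(c_{j_1},c_{j_2})$ for $0 \leq c_j \leq a$, and then integrate. (Your treatment of the measurement term is the same as the paper's: it multiplies the triple minimum by $1 = \sum_{i} \xi_{M_{j_1,j_2}}(m_i|\lambda)$ and bounds term by term, which is your ``convex combination $\geq$ minimum'' step.) The two arguments are at bottom the same truncated inclusion--exclusion idea---your numerical inequality is precisely the Bonferroni inequality applied fiberwise to the nested intervals $[0,c_j] \subseteq [0,a]$, for which unions are maxima and intersections are minima---but your version is more elementary and self-contained: no auxiliary measure space, no appeal to a named inequality, and the role of each hypothesis is transparent (normalization of $\xi_{M_{j_1,j_2}}$ enters only in the triple terms via Eq.~\eqref{eq:norm_sum_xi_m}, normalization of $\mu_{\psi_0}$ only in producing the constant $1$). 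What the paper's set-theoretic formulation buys is a direct parallel with the proof it follows from Ref.~\cite{barrett2013npe}, and immediate access to higher-order Bonferroni truncations should one want variants of \eqref{eq:bnd_sum_wc} involving more than three states per measurement.
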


An interesting situation is when for each triplet of states $(\psi_0, \psi_{j_1}, \psi_{j_2})$, there exists a measurement $M_{j_1,j_2}$ such that $P_{M_{j_1,j_2}}(m_0 | \psi_0) = P_{M_{j_1,j_2}}(m_1 | \psi_{j_1}) = P_{M_{j_1,j_2}}(m_2 | \psi_{j_2}) = 0$---in which case the quantum states $(\ket{\psi_0}, \ket{\psi_{j_1}}, \ket{\psi_{j_2}})$ are said to be \emph{PP-incompatible}~\cite{caves2002cfc}.
In such a situation the left-hand side of Inequality~\eqref{eq:bnd_sum_wc} is simply 1. If the states $\ket{\psi_j}$ are furthermore such that all quantum overlaps $\omega_Q(\ket{\psi_0}, \ket{\psi_j})$ are equal (i.e., $\omega_Q(\ket{\psi_0}, \ket{\psi_j}) = \omega_Q(\ket{\psi_0}, \ket{\psi_1})$ for all $1 \leq j \leq n$), then Lemma~\ref{lem:general_ineq} implies
\ba
\frac{1}{n} \sum_{1 \leq j \leq n} \! \kappa(\psi_0, \psi_j) \ \leq \ \frac{1}{n \, \omega_Q(\ket{\psi_0}, \ket{\psi_1})} \, , \label{eq:bnd_avg_kappa_0}
\ea
which in turn implies that at least one term $\kappa(\psi_0, \psi_j)$ in the average above is upper-bounded by $1 / [n \, \omega_Q(\ket{\psi_0}, \ket{\psi_1})]$.
Refs.~\cite{barrett2013npe} and~\cite{leifer2014pem} exhibited states for which the above upper bound on the ratios of classical-to-quantum overlaps must decrease like $1/d$ and exponentially, respectively, with the Hilbert space dimension $d$ (see Appendix~\ref{app_comparison}).
We significantly improve these results by showing here the existence of states, in any dimension $d \geq 4$, with arbitrarily low values of $\kappa$.
For that we shall make use of the following Lemma:
\begin{lem}
\label{lem:existence_states}
In any Hilbert space ${\cal H}$ of dimension $d \geq 3$, and for any $n \geq 2$, there exist $n+1$ quantum states $\{ \ket{\psi_j^{(n)}} \}_{0 \leq j \leq n}$ such that
\begin{itemize}
\item[(i)] for all $1 \leq j \leq n$, $\omega_Q(\ket{\psi_0^{(n)}}, \ket{\psi_j^{(n)}}) = 1-\sqrt{1-\frac{1}{4} n^{-1/(d-2)}} \, > \, \frac{1}{8} n^{-1/(d-2)}$;
\item[(ii)] all triplets of states $( \ket{\psi_0^{(n)}}, \ket{\psi_{j_1}^{(n)}}, \ket{\psi_{j_2}^{(n)}} )$, for $1 \leq j_1 < j_2 \leq n$, are PP-incompatible.
\end{itemize}
\end{lem}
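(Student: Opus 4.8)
The plan is to build the states explicitly in the form $\ket{\psi_j^{(n)}} = c\,\ket{\psi_0^{(n)}} + \sqrt{1-c^2}\,\ket{v_j}$ for $1 \le j \le n$, where $\ket{\psi_0^{(n)}}$ is an arbitrary fixed reference state, the $\ket{v_j}$ are unit vectors in the $(d-1)$-dimensional subspace orthogonal to $\ket{\psi_0^{(n)}}$, and $c^2 = \frac14 n^{-1/(d-2)}$. With this choice $|\braket{\psi_0^{(n)}}{\psi_j^{(n)}}|^2 = c^2$ for every $j$, so all the overlaps $\omega_Q(\ket{\psi_0^{(n)}},\ket{\psi_j^{(n)}}) = 1-\sqrt{1-c^2}$ coincide; the bound in (i) then follows from the elementary identity $1-\sqrt{1-x} = x/(1+\sqrt{1-x}) > x/2$ for $0 < x \le 1$, applied with $x = c^2$. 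Everything thus reduces to choosing the $\ket{v_j}$ so that condition (ii) holds.

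Second, I would reduce PP-incompatibility of a triplet $(\ket{\psi_0^{(n)}},\ket{\psi_{j_1}^{(n)}},\ket{\psi_{j_2}^{(n)}})$ to a condition on the single inner product $w = \braket{v_{j_1}}{v_{j_2}}$. Working inside the (generically three-dimensional) span of the triplet, it is enough to exhibit an orthonormal basis $\{\ket{n_0},\ket{n_1},\ket{n_2}\}$ of that span with $\braket{n_i}{\psi_{j_i}^{(n)}} = 0$ for each $i$ (with $j_0 = 0$): such a basis sums to the identity and, extended to a measurement on all of $\cal H$, defines the exclusion measurement $M_{j_1,j_2}$. Since $\ket{n_0}$ must be orthogonal to $\ket{\psi_0^{(n)}}$ it lies in $\mathrm{span}(\ket{v_{j_1}},\ket{v_{j_2}})$, and requiring $\ket{n_1},\ket{n_2}$ to form an orthonormal pair orthogonal to $\ket{\psi_{j_1}^{(n)}}$ and $\ket{\psi_{j_2}^{(n)}}$ respectively turns into the demand that the projections of these two states onto $\ket{n_0}^\perp$ be orthogonal. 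Solving this (a single quadratic feasibility condition in the weight $|\braket{n_0}{v_{j_1}}|^2$, optimized over the available phase freedom) I expect to find that the triplet is PP-incompatible whenever $|w| \le \tau := (1-3c^2)/(1-c^2)$.

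Third, I would guarantee the existence of $n$ unit vectors $\ket{v_j}$ with all pairwise $|\braket{v_{j_1}}{v_{j_2}}| \le \tau$ by a covering argument in the orthocomplement $\cong \mathbb{C}^{d-1}$. The key quantitative fact is that, for the uniform measure on unit vectors of $\mathbb{C}^{d-1}$, the cap $\{\ket{v} : |\braket{v}{u}|^2 \ge \tau^2\}$ has measure exactly $(1-\tau^2)^{d-2}$; hence a maximal set of vectors with pairwise inner products bounded by $\tau$ has its caps covering the whole space and therefore contains at least $(1-\tau^2)^{-(d-2)}$ elements. It then remains to verify that this number is at least $n$, i.e. that $1-\tau^2 \le 4c^2$: a direct computation gives $1-\tau^2 = 4c^2(1-2c^2)/(1-c^2)^2$, and the inequality $1-\tau^2 \le 4c^2$ reduces to $c^4 \ge 0$, so it holds. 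Selecting any $n$ vectors from such a packing completes the construction.

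The main obstacle is the second step: one has to pin down the sharp PP-incompatibility threshold on $|w|$, not merely a sufficient-but-lossy one, because the argument only closes if $\tau$ feeds into the packing count with precisely the right constant. It is the fact that the worst case is a real positive $w$, together with the exact value $\tau = (1-3c^2)/(1-c^2)$, that makes the final inequality collapse to $c^4 \ge 0$ and produces the clean constant $\tfrac14$ in (i). A complex (rather than real) orthocomplement is essential here: real vectors would pack only $\sim c^{-(d-2)}$ deep and give the wrong exponent, whereas the complex cap measure $(1-\tau^2)^{d-2}$ is exactly what yields the stated $n^{-1/(d-2)}$ scaling.
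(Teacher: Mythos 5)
Your proposal is correct, and the core construction is the same as the paper's: fix $\ket{\psi_0^{(n)}}$, pack $n$ nearly-orthogonal unit vectors $\ket{v_j}$ in its orthocomplement, and superpose with weight $\chi = c^2 = \frac{1}{4}n^{-1/(d-2)}$, so that part (i) is automatic and part (ii) reduces to a bound on the pairwise overlaps $|\braket{v_{j_1}}{v_{j_2}}|$. Where you genuinely depart from the paper is in how the two ingredients are justified: yours are self-contained where the paper's are imported. For PP-incompatibility, the paper invokes the Caves--Fuchs--Schack algebraic criterion ($x_1+x_2+x_3<1$ and $(1-x_1-x_2-x_3)^2 \ge 4x_1x_2x_3$) and verifies it through the lossy chain $\sqrt{x_3} \le \chi + (1-\chi)\sqrt{1-4\chi} \le 1-2\chi$; your explicit exclusion-basis construction instead targets the exact threshold, and your value $\tau = (1-3c^2)/(1-c^2)$ is verifiably right, since for $x_1 = x_2 = \chi$ the CFS conditions collapse to precisely $\sqrt{x_3}\le 1-2\chi$, i.e.\ $|w| \le \tau$ in the worst-case (real positive) phase. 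For the packing, the paper cites Love--Heath--Strohmer for the existence of $n$ vectors with $|\braket{v_{j_1}}{v_{j_2}}|^2 \le 1 - n^{-1/(d-2)}$; your cap-covering argument, with the correct complex cap measure $(1-t)^{d-2}$ for $\Pr[\,|\braket{v}{u}|^2 \ge t\,]$ in $\mathbb{C}^{d-1}$, is essentially the proof of that cited bound, so nothing is circular. One correction, though: your closing claim that the argument ``only closes'' if one has the sharp threshold $\tau$ is overstated. Since $\sqrt{1-4c^2} \le \tau$, the paper's lossier sufficient condition $|w| \le \sqrt{1-4c^2}$ feeds into the very same covering count---take $\delta^2 = 1-n^{-1/(d-2)}$, so a maximal $\delta$-packing has at least $(1-\delta^2)^{-(d-2)} = n$ elements---and closes the argument with the same constant $\tfrac14$ in (i); the slack $c^4 \ge 0$ you discovered at the end is exactly the slack the paper gives away in its chain of inequalities. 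So the sharp threshold buys elegance (and a cleaner ``iff'' picture), not necessity.
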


\begin{proof}
Let us fix an arbitrary state $\ket{\psi_0^{(n)}} \in {\cal H}$.
From the study of optimal Grassmannian line packings in the $(d{-}1)$-dimensional subspace orthogonal to $\ket{\psi_0^{(n)}}$, one can show that it is possible to find $n$ states $\{ \ket{\phi_j^{(n)}} \}_{1 \leq j \leq n}$ in this orthogonal subspace such that for all $j_1 \neq j_2$, $|\braket{\phi_{j_1}^{(n)}}{\phi_{j_2}^{(n)}}|^2 \leq 1 - n^{-1/(d-2)}$~\cite{love2003gra}.
From these $n$ states $\ket{\phi_j^{(n)}}$ we define, for $1 \leq j \leq n$ and with $\chi = \frac{1}{4} n^{-1/(d-2)}$ (such that $0 < \chi < \frac{1}{4}$), the states $\ket{\psi_j^{(n)}} = \sqrt{\chi} \, \ket{\psi_0^{(n)}} + \sqrt{1 {-} \chi} \, \ket{\phi_j^{(n)}}$, which indeed satisfy \emph{(i)}.

For $1 \leq j_1 < j_2 \leq n$, defining $x_1 = |\braket{\psi_0^{(n)}}{\psi_{j_1}^{(n)}}|^2$, $x_2 = |\braket{\psi_0^{(n)}}{\psi_{j_2}^{(n)}}|^2$ and $x_3 = |\braket{\psi_{j_1}^{(n)}}{\psi_{j_2}^{(n)}}|^2$, one has $x_1 = x_2 = \chi$ and $x_3 = \big| \chi + (1{-}\chi) \braket{\phi_{j_1}^{(n)}}{\phi_{j_2}^{(n)}} \big|^2 \leq  \big[ \chi + (1{-}\chi) |\braket{\phi_{j_1}^{(n)}}{\phi_{j_2}^{(n)}}| \big]^2 \leq  \big[ \chi + (1{-}\chi) \sqrt{1 - n^{-1/(d-2)}} \big]^2 \leq \big( 1{-}2\chi \big)^2$. One then finds that $x_1 + x_2 + x_3 \leq 2\chi + (1{-}2\chi)^2 < 1$ and $(1{-}x_1{-}x_2{-}x_3)^2 \geq 4\chi^2 (1{-}2\chi)^2 \geq 4 x_1 x_2 x_3$, which shows that $( \ket{\psi_0^{(n)}}, \ket{\psi_{j_1}^{(n)}}, \ket{\psi_{j_2}^{(n)}} )$ are PP-incompatible~\cite{caves2002cfc,barrett2013npe}.
\end{proof}

The states just constructed thus satisfy the assumptions of Eq.~\eqref{eq:bnd_avg_kappa_0}, which gives
\ba
\frac{1}{n} \sum_{j} \kappa(\psi_0^{(n)}\!, \psi_j^{(n)}) \leq \frac{1}{n \Big( 1{-}\sqrt{1{-}\frac{1}{4} n^{-1/(d-2)}} \Big)} < 8/n^{\frac{d-3}{d-2}} \, . \nonumber \\[-3mm] \label{eq:bnd_avg_8n}
\ea
In particular, for each $n$, there exists at least one state $\psi_j^{(n)}$ such that $\kappa(\psi_0^{(n)}, \psi_j^{(n)}) \leq 8/n^{\frac{d-3}{d-2}}$. Noting that for any (fixed) $d \geq 4$, this upper-bound tends to 0 as $n \to \infty$, we conclude that, as claimed before:
\begin{thm}
\label{thm:main_result_d4}
For any ontological model reproducing quantum measurement statistics in a Hilbert space of dimension $d \geq 4$, there exist (nonorthogonal) states $\phi, \psi$ with an arbitrarily low ratio of classical-to-quantum overlaps $\kappa(\psi,\phi)$.
\end{thm}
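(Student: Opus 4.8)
The plan is to read the theorem off directly from the averaged bound \eqref{eq:bnd_avg_8n}, which already packages the two essential ingredients: the state construction of Lemma~\ref{lem:existence_states} and the general overlap inequality of Lemma~\ref{lem:general_ineq}. Concretely, for each integer $n \geq 2$ I would invoke Lemma~\ref{lem:existence_states} to obtain $n+1$ states $\{\ket{\psi_j^{(n)}}\}$ whose pairwise quantum overlaps $\omega_Q(\ket{\psi_0^{(n)}},\ket{\psi_j^{(n)}})$ are all equal and whose triplets $(\ket{\psi_0^{(n)}},\ket{\psi_{j_1}^{(n)}},\ket{\psi_{j_2}^{(n)}})$ are all PP-incompatible. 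These are exactly the hypotheses under which Lemma~\ref{lem:general_ineq} specializes to \eqref{eq:bnd_avg_kappa_0}: PP-incompatibility lets one choose, for each triplet, a measurement $M_{j_1,j_2}$ with $\sum_{i=0}^2 P_{M_{j_1,j_2}}(m_i|\psi_{j_i}^{(n)})=0$, collapsing the right-hand side of \eqref{eq:bnd_sum_wc} to $1$.

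Next I would combine \eqref{eq:bnd_avg_kappa_0} with the explicit lower bound $\omega_Q(\ket{\psi_0^{(n)}},\ket{\psi_1^{(n)}}) > \tfrac18 n^{-1/(d-2)}$ from item (i) of Lemma~\ref{lem:existence_states} to obtain the quantitative bound \eqref{eq:bnd_avg_8n} on the \emph{average} of the ratios $\kappa(\psi_0^{(n)},\psi_j^{(n)})$. A one-line averaging (pigeonhole) argument then guarantees that, for each $n$, at least one index $j$ realizes $\kappa(\psi_0^{(n)},\psi_j^{(n)}) \leq 8/n^{(d-3)/(d-2)}$. Finally I would let $n\to\infty$: for any fixed $d\geq 4$ the exponent $(d-3)/(d-2)$ is strictly positive, so the bound tends to $0$; given any $\varepsilon>0$ one picks $n$ large enough that $8/n^{(d-3)/(d-2)}<\varepsilon$ and takes $\psi=\psi_0^{(n)}$, $\phi=\psi_j^{(n)}$ for the witnessing index $j$. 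These states are nonorthogonal because item (i) guarantees a strictly positive quantum overlap, so $\kappa$ is well defined and can be made as small as desired.

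Since the content is essentially all borrowed from the two Lemmas, the only genuine obstacle is the dimension threshold $d\geq 4$: the whole scheme hinges on the exponent $(d-3)/(d-2)$ being positive, which fails precisely at $d=3$, where \eqref{eq:bnd_avg_8n} degenerates to the constant bound $8$ and no limiting argument can force $\kappa\to 0$. The substantive work lies upstream, in the Grassmannian line-packing construction underlying Lemma~\ref{lem:existence_states}, which controls how fast the maximal pairwise overlap of the $n$ auxiliary states in the $(d{-}1)$-dimensional orthogonal subspace shrinks with $n$; that scaling is what ultimately sets the exponent. I would therefore present the theorem as an immediate consequence of \eqref{eq:bnd_avg_8n}, emphasizing the role of $d\geq 4$ rather than repeating the derivation.
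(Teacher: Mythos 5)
Your proposal is correct and follows essentially the same route as the paper: the paper likewise applies Lemma~\ref{lem:existence_states} to specialize Lemma~\ref{lem:general_ineq} (via Eq.~\eqref{eq:bnd_avg_kappa_0}) into the averaged bound~\eqref{eq:bnd_avg_8n}, extracts a single witnessing pair by the same pigeonhole observation, and lets $n \to \infty$ using the positivity of the exponent $(d-3)/(d-2)$ for $d \geq 4$. Your remarks on nonorthogonality and on why the argument degenerates at $d=3$ match the paper's own discussion.
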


It remains an open question whether the same result holds for a Hilbert space of dimension $d = 3$ or not. On the other hand, in the case of dimension $d=2$, one can check that the Kochen-Specker model~\cite{kochen1967the,harrigan2010ein} for projective measurements is maximally $\psi$-epistemic---i.e., is such that $\kappa(\psi,\phi) = 1$ for all nonorthogonal states $\psi, \phi$.

\paragraph{Upper-bounding $\kappa(\psi,\phi)$ experimentally. ---}

The argument leading to Theorem~\ref{thm:main_result_d4} above required us to consider infinitely many states ($n{+}1 \to \infty$) and measurements (in $\frac{n(n-1)}{2} \to \infty$ bases).
Nevertheless, using finitely many states and measurements already allows one to put nontrivial bounds on some ratios of classical-to-quantum overlaps $\kappa(\psi,\phi)$, which can be verified experimentally.

In order to simplify the discussion below, we will now assume (as in Ref.~\cite{barrett2013npe}) that in the ontological model under study, $\kappa(\psi,\phi)$ is lower-bounded by a fixed value $\kappa_0$: $\kappa(\psi,\phi) \geq \kappa_0$ for all $\psi,\phi$.
Our goal will then be to upper-bound the value of $\kappa_0$. Note, however, that in all we write below, $\kappa_0$ can simply be replaced by $\min_{1 \leq j \leq n} [\kappa(\psi_0, \psi_j)]$ for the states under consideration---or even, when all quantum overlaps $\omega_Q(\ket{\psi_0}, \ket{\psi_j})$ are equal, by the average $\frac{1}{n} \sum_{j} \kappa(\psi_0, \psi_j)$ (as in Eq.~\eqref{eq:bnd_avg_kappa_0}).

In regard to experimental tests, one may want to minimize the number of states and measurements to be used, as well as the bound on $\kappa_0$.
Although the PP-incompatibility property used above for each triplet $(\ket{\psi_0^{(n)}}, \ket{\psi_{j_1}^{(n)}}, \ket{\psi_{j_2}^{(n)}})$ was quite convenient to obtain our theoretical results, using states with such a property is in fact not optimal for experimental purposes.
Note indeed that this PP-incompatibility criterion is not necessary in our approach, and neither is the assumption that all quantum overlaps $\omega_Q(\ket{\psi_0}, \ket{\psi_j})$ are equal: it is still possible without those to bound the value of $\kappa_0$ by using Eq.~\eqref{eq:bnd_sum_wc} of Lemma~\ref{lem:general_ineq}, which implies that
\ba
\kappa_0 \ \leq \ \frac{1 + \sum_{j_1 < j_2} \, \sum_{i=0}^2 \, P_{M_{j_1,j_2}}(m_i | \psi_{j_i})}{\sum_{j} \omega_Q(\ket{\psi_0}, \ket{\psi_j})}. \
\label{eq:bnd_min_k}
\ea
The right-hand side above can be obtained experimentally by preparing the $n+1$ quantum states $\ket{\psi_j}$ and measuring them according to some measurements $M_{j_1,j_2}$, so as to estimate the probabilities $P_{M_{j_1,j_2}}(m_i | \psi_{j_i})$. Note that no assumption needs to be made on the measurements to calculate the bound (they can in principle correspond to any unknown POVM); one, however, needs to make sure that the states $\ket{\psi_j}$ are reliably prepared, so that the quantum overlaps $\omega_Q(\ket{\psi_0}, \ket{\psi_j})$ are reliably calculated.

Already for a Hilbert space of dimension $d = 3$, it is possible to derive nontrivial bounds for $\kappa_0$.
The authors of Ref.~\cite{barrett2013npe} constructed a set of $n+1 = 9+1$ states $\{ \ket{\psi_j} \}$ which allowed them to show that $\kappa_0 \leq 0.95$. We found that using $n+1 = 3+1$ states is actually sufficient in principle to get a nontrivial bound, although the bound we obtained is quite close to 1, and may be hard to demonstrate experimentally: for the states and measurements we found, one gets, from~\eqref{eq:bnd_min_k}, $\kappa_0 \leq 0.9964$ (see Appendix~\ref{app_explicit_states_meas}).
Moving to $n + 1 = 4 + 1$, we obtained, with the states and measurements specified in Appendix~\ref{app_explicit_states_meas}, the bound $\kappa_0 \leq 0.9361$---which already improves (using fewer states and measurements, i.e. fewer experimental resources) the bound given in Ref.~\cite{barrett2013npe}, thus allowing one to rule out a strictly larger class of $\psi$-epistemic models for quantum measurement statistics.

An important consideration for experimental tests is the resistance to noise. In order to get a rough estimate of how robust our results are, let us assume that the probabilities $P_{M_{j_1,j_2}}(m_i | \psi_{j_i})$ in Eq.~\eqref{eq:bnd_min_k} are each estimated up to a quantity $\epsilon$ (taken for simplicity to be the same for all these probabilities). In order to take the worst case into account in the bound of Eq.~\eqref{eq:bnd_min_k}, we thus add $\epsilon$ to each term $P_{M_{j_1,j_2}}(m_i | \psi_{j_i})$---that is, we add $3 \!\times\! \frac{n(n-1)}{2} \, \epsilon$ in total to the numerator of the fraction in~\eqref{eq:bnd_min_k}.
The bound remains nontrival if it is lower than 1; for the states and measurements specified in Appendix~\ref{app_explicit_states_meas}, this requires $\epsilon < 5 \times 10^{-3}$, which is also more robust than what was found in Ref.~\cite{barrett2013npe}.

Increasing the number of states (and measurement bases) allows one to decrease the bound on $\kappa_0$. Numerically we could obtain a bound $\kappa_0 \leq 0.6408$ using $n+1 = 20+1$ states~\cite{branciard2014how}. We expect it is possible to set an even lower bound by using more states; how low the bound can be, and in particular whether it can be taken down to zero (for $n \to \infty$, as in the case $d \geq 4$), is an open question. Note also that as $n$ becomes large, the robustness to noise in general decreases~\cite{footnote_robust_noise}.

In dimension $d = 4$, we found again that $n+1 = 3+1$ is enough to set a nontrivial bound on $\kappa_0$ (we actually found the same bound as in the $d = 3$ case above). With $n+1 = 4+1$ states, we could find an improvement over the $d = 3$ case, obtaining the bound $\kappa_0 \leq 0.9054$ (Appendix~\ref{app_explicit_states_meas}). Regarding the robustness to noise, the states and measurements we use require a value of $\epsilon < 7 \times 10^{-3}$ to give a nontrivial bound on $\kappa_0$.
We recall that, from the result of Theorem~\ref{thm:main_result_d4}, using more states allows one to set an arbitrarily low upper bound on $\kappa_0$. Namely, any bound $\kappa_0 \leq \bar\kappa$, for any $\bar\kappa > 0$, can be obtained from~\eqref{eq:bnd_avg_8n} by using $n \geq (8/\bar\kappa)^{\frac{d-2}{d-3}}$ states (for $d \geq 4$) constructed as in Lemma~\ref{lem:existence_states} (which may, however, not be optimal).
Again, as the number of states gets larger, the resistance to noise decreases---e.g., the states used in~\eqref{eq:bnd_avg_8n} require a noise parameter $\epsilon \lesssim 1/ \big( 12 \, n^\frac{d-1}{d-2} \big)$ for the bound to be lower than~1.

\paragraph{Discussion. ---}

We have proven in this Letter that for any ontological model reproducing the quantum measurement statistics in a Hilbert space of dimension $d \geq 4$, there exist states $\psi, \phi$ whose ratio of classical-to-quantum overlaps is arbitrarily small. Recalling that this ratio quantifies how much the model explains of the indistinguishability of the quantum states $\ket{\psi}$ and $\ket{\phi}$, we conclude that $\psi$-epistemic models must be arbitrarily bad in general at explaining quantum indistinguishability.

Note that the states $\ket{\psi_j^{(n)}}$ ($1 \leq j \leq n$) used in our proof become closer and closer, as $n$ increases, to being orthogonal to $\ket{\psi_0^{(n)}}$.
Recalling that $|\braket{\psi_0^{(n)}}{\psi_j^{(n)}}|^2 = \frac{1}{4} n^{-1/(d-2)}$, what is actually shown by Eq.~\eqref{eq:bnd_avg_8n} is the existence of states $\psi, \phi$ such that
\ba
\kappa(\psi, \phi) \ \leq  \ \frac{4^d}{8} |\braket{\psi}{\phi}|^{2(d-3)} \ \ \text{with} \ \ |\braket{\psi}{\phi}| \to 0. \quad
\ea
It has been argued, for reasons to do with coarse graining of measurements, that it may be natural to expect $\kappa(\psi, \phi) \to 0$ when $|\braket{\psi}{\phi}| \to 0$~\cite{leifer2014pem,leifer_private}. Our result thus imposes some constraint on the possible scaling of $\kappa(\psi, \phi)$ as the two states tend to orthogonal states, for any $\psi$-epistemic model reproducing quantum measurement statistics---e.g., those of Refs~\cite{lewis2012dis,aaronson2013psi}.
Another question worth investigating would be to see what bounds on $\kappa(\psi,\phi)$ can be derived for some fixed values of $|\braket{\psi}{\phi}|$.

Interestingly, our theoretical result here (as well as that of Ref.~\cite{pusey2012ont} and many of the other no-go theorems that followed) does not depend on the specific form of the Born rule.
By considering PP-incompatible triplets of states, we just used in our proof the fact that when quantum theory assigns zero probability to a measurement outcome, the ontological model must do the same~\cite{footnote_born_rule}.
It would be interesting to see if, when the Born rule predicts a nonzero probability, its specific form imposes stronger constraints on possible $\psi$-epistemic models for quantum measurement statistics (e.g., if it allows one to prove our result for $d=3$ as well).
This may give more insights on which types of models can reproduce quantum theory.

Using finitely many states and measurements, one can only set finite bounds on the ratios of classical-to-quantum overlaps. We exhibited states and measurements allowing one to put nontrivial bounds already for $d = 3$.
These can be used experimentally to rule out some families of $\psi$-epistemic models---namely, those which assign larger ratios of classical-to-quantum overlaps than the bound demonstrated, for the states used experimentally (recall that $\kappa_0$ used for simplicity above can be replaced by $\min_{1 \leq j \leq n} [\kappa(\psi_0, \psi_j)]$).
In particular, as soon as the bound is lower than 1, maximally $\psi$-epistemic models are ruled out. It is worth emphasizing that this can, in principle, be done without introducing any auxiliary assumption, like the independence condition used, e.g., in the experimental test reported in Ref.~\cite{nigg2012can}.
Furthermore, the allowed amount of noise for the states and measurements exhibited in this Letter for dimensions $d = 3$ and 4, should make an experiment feasible with current technology, e.g., in photonics.

\medskip

\paragraph{Acknowledgments. ---}

I acknowledge financial support from a Discovery Early Career Researcher Award (DE140100489) of the Australian Research Council and from the `Retour Post-Doctorants' program (ANR-13-PDOC-0026) of the French National Research Agency.

\appendix

\setcounter{secnumdepth}{3}
\renewcommand{\thesubsection}{\Alph{subsection}}

\renewcommand{\theequation}{A\arabic{equation}}
\setcounter{equation}{0}

\subsection*{Appendix}

\subsection{Proof of Lemma~1}
\label{app_proof_lemma1}

The proof of Lemma~1 is quite similar to that given in Appendix~1 of Ref.~\cite{barrett2013npe}. For completeness and clarity, we present it here explicitly.

\medskip

For any epistemic state $\mu_\psi$ (a nonnegative, normalised function over the ontic state space $\Lambda$), let us define the set
\ba
\Psi = \{ (\lambda, x) \in \Lambda \times \mathbb{R}^+ \, | \ 0 \leq x \leq \mu_\psi(\lambda) \}.
\ea
Denoting by $\text{d} \nu = \text{d} \lambda \times \text{d} x$ the volume measure on the space $\Lambda \times \mathbb{R}^+$, where $\text{d} x$ is the Lebesgue measure of $\mathbb{R}^+$, the volume of the set $\Psi$ is, by normalisation of $\mu_\psi$,
\ba
\nu(\Psi) = \int_\Psi \text{d} \nu = \int_\Lambda \Big( \int_0^{\mu_\psi(\lambda)} \text{d} x \Big) \, \text{d} \lambda  = \int_\Lambda \mu_\psi(\lambda) \, \text{d} \lambda = 1. \nonumber \\[-1mm]
\ea

For two epistemic states $\mu_{\psi_0}$ and $\mu_{\psi_j}$, one has
\ba
\Psi_0 \cap \Psi_j = \{ (\lambda, x) \, | \ 0 \leq x \leq \min[ \mu_{\psi_0}(\lambda), \mu_{\psi_j}(\lambda) ] \}, \quad \quad
\ea
and hence
\ba
\nu(\Psi_0 \cap \Psi_j) = \int_\Lambda \min[ \mu_{\psi_0}(\lambda), \mu_{\psi_j}(\lambda) ] \, \text{d} \lambda = \omega_C(\mu_{\psi_0}, \mu_{\psi_j}). \nonumber \\[-1mm] \label{eq:app_gamma2}
\ea

For now three epistemic states $\mu_{\psi_0}$, $\mu_{\psi_{j_1}}$ and $\mu_{\psi_{j_2}}$ one gets, as above,
\ba
\nu(\Psi_0 \cap \Psi_{j_1} \cap \Psi_{j_2}) = \int_\Lambda \min[ \mu_{\psi_0}(\lambda), \mu_{\psi_{j_1}}(\lambda), \mu_{\psi_{j_2}}(\lambda) ] \, \text{d} \lambda. \nonumber \\[-1mm] \label{eq:app_gamma3}
\ea
Consider a measurement $M_{j_1,j_2}$ with 3 possible outcomes $(m_0, m_1, m_2)$. From the normalisation condition of Eq.~(2), we have that for each $\lambda$, $\sum_{i=0}^2 \, \xi_{M_{j_1,j_2}}(m_i|\lambda) = 1$.
From~\eqref{eq:app_gamma3}, it then follows that (with $j_0 = 0$)
\ba
\nu(\Psi_0 \cap \Psi_{j_1} \! \cap \Psi_{j_2}) &=& \int_{\!\Lambda} \sum_{i=0}^2 \xi_{M_{j_1,j_2}}(m_i|\lambda) \min_{i'}[ \mu_{\psi_{j_{i'}}}\!(\lambda) ] \, \text{d} \lambda \nonumber \\
&\leq& \ \sum_{i=0}^2 \, \int_\Lambda \, \xi_{M_{j_1,j_2}}(m_i|\lambda) \, \mu_{\psi_{j_{i}}}(\lambda) \, \text{d} \lambda \nonumber \\
&& \ = \ \sum_{i=0}^2 \, P_{M_{j_1,j_2}}(m_i|\psi_{j_{i}}).  \label{eq:app_gamma3bis}
\ea

\medskip

From the $n+1$ states $\{ \psi_j \}_{0 \leq j \leq n}$, let us define, for $1 \leq j \leq n$, the $n$ measurable sets
\ba
A_j = \Psi_0 \cap \Psi_j.
\ea
The second Bonferroni inequality applied to these sets writes
\ba
\nu \Big( \bigcup_j A_j \Big) \ \geq \ \sum_j \nu \big( A_j \big) \ - \ \sum_{j_1 < j_2} \nu \big( A_{j_1} \cap A_{j_2} \big). \quad \quad
\ea
Using Eqs.~\eqref{eq:app_gamma2} and~\eqref{eq:app_gamma3bis} (noting that $A_{j_1} \cap A_{j_2} = \Psi_0 \cap \Psi_{j_1} \cap \Psi_{j_2}$) and the fact that $\nu \big( \bigcup_j A_j \big) = \nu \big( \bigcup_j \Psi_0 \cap \Psi_j \big) \leq \nu \big( \Psi_0 \big) = 1$, this implies that
\ba
\sum_j \omega_C(\mu_{\psi_0}, \mu_{\psi_j}) \, \leq \, 1 + \! \sum_{j_1 < j_2} \sum_{i=0}^2 \, P_{M_{j_1,j_2}}(m_i|\psi_{j_{i}}). \quad \quad
\ea

Note that this inequality holds for any ontological model, independently of whether it reproduces quantum measurement statistics or not. Quantum theory enters when we write $\omega_C(\mu_{\psi_0}, \mu_{\psi_j}) = \kappa(\psi_0, \psi_j) \, \omega_Q(\ket{\psi_0}, \ket{\psi_j})$ in the l.h.s. of Inequality~(7) of Lemma~1.

\qed

\subsection{Comparison with the results of Refs.~\cite{barrett2013npe,leifer2014pem}}
\label{app_comparison}

Here we compare our results with the bounds obtained in Refs.~\cite{barrett2013npe,leifer2014pem} on the ratios of classical-to-quantum overlaps. For simplicity let us assume that there exists $\kappa_0$ such that $\kappa(\psi,\phi) \geq \kappa_0$ for all $\psi,\phi$, recalling that $\kappa_0$ can simply be taken to be $\min_{1 \leq j \leq n} [\kappa(\psi_0, \psi_j)]$ for the states under consideration below---or even, as all quantum overlaps $\omega_Q(\ket{\psi_0}, \ket{\psi_j})$ will here be equal, the average $\frac{1}{n} \sum_{j} \kappa(\psi_0, \psi_j)$.

\medskip

In Ref.~\cite{barrett2013npe}, the $n = d^2$ quantum states $\{ \ket{\psi_j} \}_{1 \leq j \leq d^2}$ considered in our analysis were chosen to be the states of $d$ mutually unbiased bases in a Hilbert space of a prime power dimension $d \geq 4$, while $\ket{\psi_0}$ was taken to be one of the $d$ states of the remaining mutually unbiased basis (recall that for a prime power $d$, there exist $d+1$ mutually unbiased bases~\cite{wootters1989osd}). These states satisfy $\omega_Q(\ket{\psi_0}, \ket{\psi_j}) = 1 - \sqrt{1 - 1/d}$ for all $1 \leq j \leq d^2$. Furthermore, for $1 \leq j_1 < j_2 \leq d^2$, the triplets of states $( \ket{\psi_0}, \ket{\psi_{j_1}}, \ket{\psi_{j_2}} )$ thus chosen are all PP-incompatible. As in Ref.~\cite{barrett2013npe}, one can then conclude from Eq.~(8) of the main text that for $d \geq 4$ a prime power,
\ba
\kappa_0 \ \leq \ \frac{1 }{d^2 (1 - \sqrt{1 - 1/d})} \ < \ \frac{2}{d}.
\ea
This was furthermore shown to also imply a bound arbitrary for an dimension $d$, namely $\kappa_0 < 4/(d-1)$~\cite{barrett2013npe}.

\medskip

In Ref.~\cite{leifer2014pem}, Leifer followed a different approach to also set bounds on the ratios of classical-to-quantum overlaps. Nevertheless, the states he considered can also be used with our approach. Namely, for a Hilbert space of any dimension $d \geq 3$, let us define the state $\ket{\psi_0} = (1, 0, ..., 0)^T$, and let the $n = 2^{d-1}$ states $\{ \ket{\psi_j} \}_{1 \leq j \leq 2^{d-1}}$ be the Hadamard states $\frac{1}{\sqrt{d}}(1, \pm 1, ..., \pm 1)^T$. One again has in this case $\omega_Q(\ket{\psi_0}, \ket{\psi_j}) = 1 - \sqrt{1 - 1/d}$ for all $1 \leq j \leq 2^{d-1}$. Furthermore, noting that $|\braket{\psi_0}{\psi_{j_1}}|^2 = |\braket{\psi_0}{\psi_{j_2}}|^2 = 1/d$ and $|\braket{\psi_{j_1}}{\psi_{j_2}}|^2 \leq (1-2/d)^2$, one can check (using the criterion of Refs.~\cite{caves2002cfc,barrett2013npe}) that all triplets of states $( \ket{\psi_0}, \ket{\psi_{j_1}}, \ket{\psi_{j_2}} )$, for $1 \leq j_1 < j_2 \leq 2^{d-1}$, are still PP-incompatible. Eq.~(8) then leads to a bound on $\kappa_0$ that decreases exponentially with the dimension $d$:
\ba
\kappa_0 \ \leq \ \frac{1 }{2^{d-1} (1 - \sqrt{1 - 1/d})} \ < \ \frac{4d}{2^d}. \label{eq:app_k0_exp}
\ea
This bound actually scales better than that of Ref.~\cite{leifer2014pem} ($\kappa_0 \leq 2d (1{-}\epsilon/2)^d$ for dimensions $d$ divisible by 4, where $\epsilon$ is a positive constant smaller than 1).

Note that the exponential scaling just obtained requires one to consider an exponential number of states.
If one also considers an exponential number of states---say, $n = 2^m$ for some $m \geq 1$---in Lemma~2 of the main text, one gets from Eq.~(9) a bound $\kappa_0 < 8 / 2^{\frac{d-3}{d-2}m}$ which also gives and exponential scaling (as $m$ increases), already for $d = 4$.
(For the same number of states as above, i.e. $m=d-1$, the states of Lemma~2 give yet a slightly better scaling with dimension than~\eqref{eq:app_k0_exp}.)
From our study it appears that the scaling of the bound on $\kappa_0$ as the number of states increases, or the scaling of $\kappa(\psi, \phi)$ as the inner product $|\braket{\psi}{\phi}|$ tends to 0 (see Discussion in the main text), are more relevant than their scaling with the dimension of the Hilbert space.

\subsection{Explicit states and measurements for experimental tests \\ (for $d = 3,4$ and $n = 3,4$)}
\label{app_explicit_states_meas}

We give here the best states and measurements we found for bounding the value of $\kappa_0$, in dimension $d = 3$ using $n+1 = 3+1$ and $n+1=4+1$ states, and in dimension $d = 4$ using $n+1=4+1$ states (for the case $n+1 = 3+1$ in dimension $d=4$, the best states and measurements we found are the same as in the qutrit case).
The Supplemental Material of Ref.~\cite{branciard2014how} also contains the best states and measurements we found for the case $d=3, n+1 = 20+1$, which give a bound $\kappa_0 \leq 0.6408$ and a robustness to noise $\epsilon < 10^{-3}$. Note that the states and measurements we obtained are the result of a numerical search, and are not claimed to be optimal.

\medskip

The states are written below as ket vectors, while the measurements are given in the form of unitary matrices representing a projection eigenbasis: each column of $M_{j_1,j_2}$ labelled by $i = 0, ..., d-1$ corresponds to the eigenstate $\ket{m_i^{j_1,j_2}}$, giving outcome $m_i$. Note that for the case $d = 4$, the last column ($i=3$) of each matrix $M_{j_1,j_2}$ is always orthogonal to the three states $\ket{\psi_0}$, $\ket{\psi_{j_1}}$ and $\ket{\psi_{j_2}}$, and hence the corresponding outcome never appears when the measurement is performed on one of these states. To rigorously define a 3-outcome measurement, we can arbitrarily set for instance $m_3 = m_0$, so that the three measurement operators are $\{ \ketbra{m_0^{j_1,j_2}} + \ketbra{m_3^{j_1,j_2}} , \ketbra{m_1^{j_1,j_2}} , \ketbra{m_2^{j_1,j_2}} \}$.

\subsubsection*{--- $\, d = 3, \, n = 3 \,$ ---}

As claimed in the main text, using $n+1 = 3+1$ states is sufficient to get nontrivial bounds on $\kappa_0$.

We shall use here the shorthand notations $c_k = \cos \theta_k$ and $s_k = \sin \theta_k$.
The best states we found are given by
\ba
& \ket{\psi_0} =
\left( \begin{array}{c}
1 \\ 0 \\ 0
\end{array} \right) \! , \ \
\ket{\psi_1} =
\left( \begin{array}{c}
c_1 \\ s_1 \\ 0
\end{array} \right) \! , \nonumber \\[1mm]
& \ket{\psi_2} =
\left( \begin{array}{c}
c_2 \\ s_2 c_3 \\ s_2 s_3
\end{array} \right) \! , \ \
\ket{\psi_3} =
\left( \begin{array}{c}
c_2 \\ s_2 c_3 \\ - s_2 s_3
\end{array} \right) \! , \nonumber
\ea
with $\theta_1 = 1.1945$, $\theta_2 = 0.2839$ and $\theta_3 = 1.8423$. For the following measurement bases:
\ba
& M_{1,2} =
\left( \begin{array}{ccc}
c_4 & s_4 c_6 & s_4 s_6 \\
s_4 c_5 & \ - c_4 c_5 c_6 - s_5 s_6 \ & - c_4 c_5 s_6 + s_5 c_6 \\
- s_4 s_5 & c_4 s_5 c_6 - c_5 s_6 & c_4 s_5 s_6 + c_5 c_6
\end{array} \right) \! , \nonumber \\[1mm]
& M_{1,3} =
\left( \begin{array}{ccc}
c_4 & s_4 c_6 & s_4 s_6 \\
s_4 c_5 & \ - c_4 c_5 c_6 - s_5 s_6 \ & - c_4 c_5 s_6 + s_5 c_6 \\
s_4 s_5 & - c_4 s_5 c_6 + c_5 s_6 & - c_4 s_5 s_6 - c_5 c_6
\end{array} \right) \! , \nonumber \\[1mm]
& M_{2,3} =
\left( \begin{array}{ccc}
c_7 & s_7/\sqrt{2} & s_7/\sqrt{2} \\
- s_7 & c_7/\sqrt{2} & c_7/\sqrt{2} \\
0 & \, -1/\sqrt{2} \, \ & \,1\,/\sqrt{2}
\end{array} \right) \! , \nonumber
\ea
with $\theta_4 = 1.6276$, $\theta_5 = 2.2192$, $\theta_6 = 0.3100$ and $\theta_7 = 1.4269$, the bound obtained from Eq.~(10) of the main text is found to be $\kappa_0 \leq 0.9964$.

When the probabilities appearing in Eq.~(10) are estimated up to a quantity $\epsilon$, as considered in the main text, the upper bound remains non-trivial in the worst case as long as $\frac{1 + \sum_{j_1 < j_2} \, \sum_{i=0}^2 \, P_{M_{j_1,j_2}}(m_i | \psi_{j_i}) + \frac{3}{2}n(n-1) \epsilon}{\sum_{j} \omega_Q(\ket{\psi_0}, \ket{\psi_j})} < 1$, i.e. when
\ba
 \epsilon < \frac{\sum_{j} \omega_Q(\ket{\psi_0}, \ket{\psi_j}) - 1 - \sum_{j_1 < j_2} \sum_{i=0}^2 P_{M_{j_1,j_2}\!}(m_i | \psi_{j_i})}{\frac{3}{2}n(n-1)} . \nonumber \\[-1mm]
\ea
For the above states and measurements, we obtain the requirement that $\epsilon < 6 \times 10^{-4}$.

\subsubsection*{--- $\, d = 3, \, n = 4 \,$ ---}

The following $n+1 = 4+1$ states:
\ba
& \ket{\psi_0} =
\left( \begin{array}{c}
1 \\ 0 \\ 0
\end{array} \right) \! , \ \
\ket{\psi_1} =
\left( \begin{array}{c}
c_\theta \\ s_\theta \\ 0
\end{array} \right) \! , \ \
\ket{\psi_2} =
\left( \begin{array}{c}
c_\theta \\ 0 \\ s_\theta
\end{array} \right) \! , \nonumber \\[1mm]
& \ket{\psi_3} =
\left( \begin{array}{c}
c_\theta \\ -s_\theta \\ 0
\end{array} \right) \! , \ \
\ket{\psi_4} =
\left( \begin{array}{c}
c_\theta \\ 0 \\ -s_\theta
\end{array} \right) \! , \nonumber
\ea
with $c_\theta = \cos \theta$, $s_\theta = \sin \theta$ and $\theta = 0.7152$,
and the following measurement bases:
\ba
& M_{1,2} =
\left( \begin{array}{ccc}
c_\varphi & s_\varphi / \sqrt{2} & s_\varphi / \sqrt{2} \\
s_\varphi / \sqrt{2} & \ -(1{+}c_\varphi) / 2 \ & (1{-}c_\varphi) / 2 \\
s_\varphi / \sqrt{2} & (1{-}c_\varphi) / 2 & -(1{+}c_\varphi) / 2
\end{array} \right) \! , \nonumber \\[1mm]
& M_{1,3} =
\left( \begin{array}{ccc}
0 & 1 / \sqrt{2} & 1 / \sqrt{2} \\
0 & \ -1 / \sqrt{2} \ & 1 / \sqrt{2} \\
1 & 0 & 0
\end{array} \right) \! , \nonumber \\[1mm]
& M_{1,4} =
\left( \begin{array}{ccc}
c_\varphi & s_\varphi / \sqrt{2} & s_\varphi / \sqrt{2} \\
s_\varphi / \sqrt{2} & \ -(1{+}c_\varphi) / 2 \ & (1{-}c_\varphi) / 2 \\
-s_\varphi / \sqrt{2} & -(1{-}c_\varphi) / 2 & (1{+}c_\varphi) / 2
\end{array} \right) \! , \nonumber \\[1mm]
& M_{2,3} =
\left( \begin{array}{ccc}
c_\varphi & s_\varphi / \sqrt{2} & s_\varphi / \sqrt{2} \\
-s_\varphi / \sqrt{2} & \ -(1{-}c_\varphi) / 2 \ & (1{+}c_\varphi) / 2 \\
s_\varphi / \sqrt{2} & -(1{+}c_\varphi) / 2 & (1{-}c_\varphi) / 2
\end{array} \right) \! , \nonumber \\[1mm]
& M_{2,4} =
\left( \begin{array}{ccc}
0 & 1 / \sqrt{2} & 1 / \sqrt{2} \\
1 & 0 & 0 \\
0 & \ -1 / \sqrt{2} \ & 1 / \sqrt{2}
\end{array} \right) \! , \nonumber \\[1mm]
& M_{3,4} =
\left( \begin{array}{ccc}
c_\varphi & s_\varphi / \sqrt{2} & s_\varphi / \sqrt{2} \\
-s_\varphi / \sqrt{2} & (1{+}c_\varphi) / 2 & -(1{-}c_\varphi) / 2 \\
-s_\varphi / \sqrt{2} & \ -(1{-}c_\varphi) / 2 \ & (1{+}c_\varphi) / 2
\end{array} \right) \! ,  \nonumber
\ea
with $c_\varphi = \cos \varphi$, $s_\varphi = \sin \varphi$ and $\varphi = 1.4436$, give, from Eq.~(10), the bound $\kappa_0 \leq 0.9361$.
Note that as the states $\ket{\psi_j}$ ($1 \leq j \leq 4$) considered here all have the same quantum overlap with $\ket{\psi_0}$, this bound actually holds for the average ratio of classical-to-quantum overlaps, $\frac{1}{4} \sum_{j} \kappa(\psi_0, \psi_j)$ (see main text).

Regarding the robustness to experimental noise, the requirement on the quantity $\epsilon$, calculated as above, is found to be $\epsilon < 5 \times 10^{-3}$, i.e. $\epsilon$ can be an order of magnitude larger than in the previous $n=3$ case.
(Note, besides, that the states and measurements that give the highest robustness to noise may in general be slightly different from those that optimize the bound on $\kappa_0$ in the noise-free case.)

\medskip

\subsubsection*{--- $\, d = 4, \, n = 4 \,$ ---}

The following $n+1 = 4+1$ states:
\ba
& \ket{\psi_0} =
\left( \begin{array}{c}
1 \\ 0 \\ 0 \\ 0
\end{array} \right) \! , \
\ket{\psi_1} =
\left( \! \begin{array}{c}
c_\theta \\ s_\theta / \sqrt{3} \\ s_\theta / \sqrt{3} \\ s_\theta / \sqrt{3}
\end{array} \right) \! , \
\ket{\psi_2} =
\left( \! \begin{array}{c}
c_\theta \\ s_\theta / \sqrt{3} \\ -s_\theta / \sqrt{3} \\ -s_\theta / \sqrt{3}
\end{array} \right) \! , \nonumber \\[1mm]
& \ket{\psi_3} =
\left( \! \begin{array}{c}
c_\theta \\ -s_\theta / \sqrt{3} \\ s_\theta / \sqrt{3} \\ -s_\theta / \sqrt{3}
\end{array} \right) \! , \
\ket{\psi_4} =
\left( \! \begin{array}{c}
c_\theta \\ -s_\theta / \sqrt{3} \\ -s_\theta / \sqrt{3} \\ s_\theta / \sqrt{3}
\end{array} \right) \! , \nonumber
\ea
with now $\theta = 0.7274$,
and the following measurement bases:
\ba
& M_{1,2} =
\left( \begin{array}{cccc}
c_\varphi & s_\varphi / \sqrt{2} & s_\varphi / \sqrt{2} & 0 \\
s_\varphi & -c_\varphi / \sqrt{2} & -c_\varphi / \sqrt{2} & 0 \\
0 & -1/2 & 1/2 & 1/\sqrt{2} \\
0 & -1/2 & 1/2 & -1/\sqrt{2}
\end{array} \right) \! , \nonumber \\[1mm]
& M_{1,3} =
\left( \begin{array}{cccc}
c_\varphi & s_\varphi / \sqrt{2} & s_\varphi / \sqrt{2} & 0 \\
0 & -1/2 & 1/2 & 1/\sqrt{2} \\
s_\varphi & -c_\varphi / \sqrt{2} & -c_\varphi / \sqrt{2} & 0 \\
0 & -1/2 & 1/2 & -1/\sqrt{2}
\end{array} \right) \! , \nonumber \\[1mm]
& M_{1,4} =
\left( \begin{array}{cccc}
c_\varphi & s_\varphi / \sqrt{2} & s_\varphi / \sqrt{2} & 0 \\
0 & -1/2 & 1/2 & 1/\sqrt{2} \\
0 & -1/2 & 1/2 & -1/\sqrt{2} \\
s_\varphi & -c_\varphi / \sqrt{2} & -c_\varphi / \sqrt{2} & 0
\end{array} \right) \! , \nonumber \\[1mm]
& M_{2,3} =
\left( \begin{array}{cccc}
c_\varphi & s_\varphi / \sqrt{2} & s_\varphi / \sqrt{2} & 0 \\
0 & -1/2 & 1/2 & 1/\sqrt{2} \\
0 & 1/2 & -1/2 & 1/\sqrt{2} \\
-s_\varphi & c_\varphi / \sqrt{2} & c_\varphi / \sqrt{2} & 0
\end{array} \right) \! , \nonumber \\[1mm]
& M_{2,4} =
\left( \begin{array}{cccc}
c_\varphi & s_\varphi / \sqrt{2} & s_\varphi / \sqrt{2} & 0 \\
0 & -1/2 & 1/2 & 1/\sqrt{2} \\
-s_\varphi & c_\varphi / \sqrt{2} & c_\varphi / \sqrt{2} & 0 \\
0 & 1/2 & -1/2 & 1/\sqrt{2}
\end{array} \right) \! , \nonumber \\[1mm]
& M_{3,4} =
\left( \begin{array}{cccc}
c_\varphi & s_\varphi / \sqrt{2} & s_\varphi / \sqrt{2} & 0 \\
-s_\varphi & c_\varphi / \sqrt{2} & c_\varphi / \sqrt{2} & 0 \\
0 & -1/2 & 1/2 & 1/\sqrt{2} \\
0 & 1/2 & -1/2 & 1/\sqrt{2}
\end{array} \right) \! , \nonumber
\label{eq:def_meas_d4n4}
\ea
with now $\varphi = 1.4946$, give, from Eq.~(10), the bound $\kappa_0 \leq 0.9054$.
Again, as the states $\ket{\psi_j}$ ($1 \leq j \leq 4$) considered here all have the same quantum overlap with $\ket{\psi_0}$, this bound actually holds for the average $\frac{1}{4} \sum_{j} \kappa(\psi_0, \psi_j)$.

The requirement on the noise parameter $\epsilon$ is found here to be $\epsilon < 7 \times 10^{-3}$; for $d=4$, an experiment with the above $n+1=4+1$ states would thus allow for some larger noise than one with the previous qutrit states.

\end{document}